\makeatletter
\def\input@path{{styles/}}
\makeatother
\documentclass{ifacconf}
\usepackage{natbib}
\usepackage{url}
\usepackage{eurosym}
\usepackage{tabularx}
\usepackage{longtable}
\usepackage{float}
\usepackage{placeins}
\usepackage[table,dvipsnames]{xcolor}
\usepackage{graphicx}
\usepackage{algorithm}
\usepackage{algpseudocode}
\usepackage{array}
\usepackage{multirow}
\usepackage{textcomp}
\usepackage{stfloats}
\usepackage{verbatim}
\usepackage{amsmath,amssymb,amsfonts,mathtools,bm}
\usepackage{tikz}
\usepackage{siunitx}
\usepackage{arydshln}
\usepackage{balance}
\usepackage{booktabs}
\usepackage{bbold}
\usepackage{marginnote}
\usepackage{soul}
\usepackage[makeroom]{cancel}
\usepackage{cases}
\usepackage[hidelinks,implicit=false]{hyperref}

\newtheorem{assumption}{Assumption}

\newtheorem{remark}{Remark}

\usepackage{eso-pic}  % for arxiv statement

\makeatletter
\renewcommand\appendix{
    \par
    \setcounter{section}{0}
    \setcounter{subsection}{0}
    \gdef\thesection{\Alph{section}}
}
\makeatother

\begin{document}

% arxiv statement
\AddToShipoutPictureFG*{%
  \AtPageUpperLeft{%
    \raisebox{-25mm}{%
      \makebox[\paperwidth][c]{%
        \footnotesize
        This work has been submitted to IFAC for possible publication.
      }%
    }%
  }%
}

\begin{frontmatter}

% IFAC: 10 words at maximum

% Scalable
% Partial Measurements
% Structure-Exploiting
% Nonlinear
% Physical and Neural Dynamics

% Scalable Identification of Physical-Neural Dynamics from Partial Measurements: A Structure-Exploiting PEM–MHE Approach

%\title{Scalable Identification of general Parametric Models from noisy Measurements}
\title{Scalable Identification of\\Physics-Based and Neural Dynamics\\from Partial Measurements}

% \thanks[footnoteinfo]{...}

\author[Chalmers,Imperial]{Yibo Zhou}
\author[Imperial]{Johannes K\"ohler}

\address[Chalmers]{Department of Electrical Engineering, Chalmers University of Technology, Gothenburg, Sweden (e-mail: yiboz@chalmers.se).}
\address[Imperial]{Department of Mechanical Engineering, Imperial College London, London, United Kingdom (e-mail: j.kohler@imperial.ac.uk).}

\begin{abstract} % Abstract of 50-100 words
We present a scalable method for identifying nonlinear dynamics from partial measurements, supporting physics-based, neural-network, and gray-box models.
We use the Prediction Error Method (PEM) coupled with state estimation performed by a Moving Horizon Estimation (MHE) scheme, which can account directly for general nonlinear dynamics. This results in a bi-level optimization problem that scales poorly with increasing model parameters. 
To improve computational efficiency, we derive a reformulation that avoids explicit parameter-wide sensitivity matrices and exploits the temporal structure of the underlying optimization problem.
Numerical comparisons show up to a $215\times$ speedup in hypergradient evaluation.
Racing-vehicle simulations demonstrate accurate prediction and physical-parameter recovery in tire-force identification.
Real-world data show that the gray-box model improves prediction accuracy with a neural residual while retaining physical interpretability.
\end{abstract}

%\begin{keyword}Nonlinear system identification; Grey-box modelling; Parameter and state estimation; Parametric sensitivity analysis; Large scale optimization problems; Vehicle dynamic systems;\end{keyword}

\end{frontmatter}

% Submission notes:
% Special issue: https://www.sciencedirect.com/special-issue/331949/recent-topics-in-system-identification-learning-models-for-decision-and-control-special-issue-from-ifac-sysid-2027
% Joint journal/conference submission deadline: October 11, 2026.
% IFAC Symposia format; initial submissions must be 4-8 pages.

\section{Introduction}
\label{sec:introduction}

Accurate dynamical models are essential for prediction and control design. In this context, system identification provides established methods for learning models directly from measured data~\citep{schoukens2019nonlinear}. Approaches based on first principles exploit known physical structure and retain interpretable parameters. However, some dynamical effects remain difficult to model analytically, motivating flexible black-box representations such as neural networks (NNs)~\citep{chen2018neural}. Gray-box models combine these advantages by retaining physical structure and using neural networks to represent unknown dynamics~\citep{karniadakis2021physics, forgione2021continuous}. Identifying such gray-box models can be challenging when full-state measurements are unavailable, since state reconstruction must be coupled with model identification. In this paper, we focus on identifying gray-box models using noisy partial state measurements and high-dimensional neural network parameterizations.

\textit{Neural state-space models:} 
In the case of full-state measurements, continuous-time dynamics can be learned using neural ODEs by minimizing simulation errors and differentiating through numerical integration \citep{chen2018neural}. Alternatively, neural dynamics can be learned using standard regression for discrete-time systems~\citep{harrison2020meta}.

When the states are not fully measured, past measured inputs and outputs can be used directly to represent an extended state and learn a neural NARX model~\citep{lin1996learning}. Alternatively, recurrent models such as Long Short-term Memory (LSTM)~\citep{hochreiter1997long} propagate information recursively over time through latent states. Encoder-based methods, such as SUBNET~\citep{beintema2023deep} learn a mapping from past input-output data to latent states. Similar ideas are used to learn latent world models~\citep{ha2018recurrent,hafner2023mastering}, which also account for rich high-dimensional observations, such as images. 

Overall, these approaches can learn flexible neural dynamics from data, but provide limited options for directly incorporating prior physical knowledge because their internal representations need not correspond to physical quantities. However, for many applications, it is desirable to benefit from such prior knowledge while learning more flexible neural dynamics.

\textit{Physics-informed machine learning:}
To incorporate prescribed physical quantities, several approaches introduce prior physical knowledge into neural models. When the full state is measured, representative approaches include Physics-Informed Neural Networks (PINNs), which impose governing-equation residuals in the training objective \citep{raissi2019physics}, and Hamiltonian neural networks, which encode specific physical structure directly into the learned dynamics \citep{greydanus2019hamiltonian}.

When the state is only partially observed, the coupling between state and parameter estimation can be efficiently handled using Expectation Maximization (EM)~\citep{gibson2005robust} in the linear case. However, EM becomes computationally challenging for general nonlinear models~\citep{schon2011system}. To identify nonlinear dynamics from partial measurements, representative approaches include joint optimization of hidden-state trajectories and neural model parameters~\citep{forgione2021continuous}, and variational inference of unmeasured physical states alongside learning Gaussian-process dynamics~\citep{curi2020structured}.

 % equation discovery from learned latent coordinates~\citep{champion2019data} or reconstruction physical states~\citep{chen2021physics}.

% For the setting considered here, two properties are particularly desirable: predictive performance and preservation of the physical interpretation of the unmeasured states. The PEM--MHE framework of \cite{simpson2024parallelizable} provides a natural formulation combining these properties.
We are particularly interested in approaches that achieve accurate predictions while preserving the physical interpretation of unmeasured states. Our proposed approach builds on the PEM--MHE framework of \citet{simpson2024parallelizable}.
Prediction Error Method (PEM) is a classical system-identification method that estimates model parameters by minimizing the prediction error of an observer~\citep{ljung1998system}, which is well suited for systems with noisy partial measurement. Moving Horizon Estimation (MHE) is an optimization-based state estimator that can be applied to general nonlinear systems~\citep{rawlings2020model}. By embedding a fixed-window MHE within PEM, this approach estimates the model parameters through optimizing the general MHE observer. However, the framework was demonstrated only on small-scale systems with two unknown parameters. Extending it to complex neural models with a large number of parameters is computationally challenging, due to the nested optimization between model learning and state estimation.

\textit{Contribution:}
To address this scalability limitation, we develop a scalable extension of the PEM--MHE framework.
We focus on modeling generality, computational scalability, numerical implementation, and practical applicability of the framework, while statistical guarantees, such as asymptotic consistency, are beyond the scope of this work. Specifically, the proposed framework provides the following capabilities:

\begin{itemize}
    \item \textbf{Modeling flexibility:}
    A unified continuous-time formulation enables the identification of physical parameters and neural-network weights, supporting physics-based, neural, and gray-box models under partial state measurements.

    \item \textbf{Scalability with parameter number:}
    We use Vector--Jacobian Products (VJPs) to avoid forming parameter-wide derivative matrices. This improves computational and memory scaling with neural-network size.

    \item \textbf{Parallel efficiency:}
    A batched Karush-Kuhn-Tucker (KKT) implementation exploits independence across data windows and temporal sparsity within each window. This reduces training time and memory usage.

    \item \textbf{Numerical comparisons:}
    Numerical comparisons demonstrate up to a $215\times$ speedup in hypergradient evaluation.
    In simulation, the proposed approach achieves over $80\times$ lower validation prediction error than the best-performing baseline, while accurately recovering physical parameters.
    On real-world vehicle data, the gray-box model reduces prediction error by approximately $31\%$ compared with its physics-only counterpart.
\end{itemize}

\section{Problem Setup and Preliminaries}
\label{sec:problem-setup}

\subsubsection*{Notation:}
For a differentiable scalar-valued function $f:\mathbb{R}^n\rightarrow\mathbb{R}$, we denote its gradient with respect to $x$ by $\nabla_x f:=\left(\frac{\partial f(x)}{\partial x}\right)^\top\in\mathbb R^n$. By $Q\succ0$, we indicate that a matrix $Q$ is symmetric positive definite. For $Q\in\mathbb{R}^{n\times n}$ with $Q\succ0$, we denote the weighted Euclidean norm by $\|x\|_Q:=\sqrt{x^\top Qx}$. For vectors $x_1,\ldots,x_p$, we denote their vertical concatenation by $\operatorname{col}(x_1,\ldots,x_p):=[x_1^\top,\ldots,x_p^\top]^\top$.

\subsection{Setup}
\label{subsec:system-identification-setting}
We have a parametrized nonlinear dynamics model of the form 
\begin{subequations}
\label{eq:discrete-time-model}
\begin{align}
    x_{k+1} &= f_{\mathrm d}(x_k,u_k,w_k;\theta),
    \label{eq:discrete-time-dynamics} \\
    y_k &= Cx_k+F\eta_k.
    \label{eq:discrete-time-output}
\end{align}
\end{subequations}
with state $x_k\in\mathbb{R}^{n_{\mathrm{x}}}$, control input $u_k\in\mathbb{R}^{n_{\mathrm{u}}}$, measured output $y_k\in\mathbb{R}^{n_{\mathrm{y}}}$, process noise $w_k\in\mathbb{R}^{n_{\mathrm{w}}}$, measurement noise $\eta_k\in\mathbb{R}^{n_\eta}$, and time $k\in\mathbb{N}$.
The process noise sequence $\{w_k\}$ and measurement noise sequence $\{\eta_k\}$ are mutually independent, zero-mean, and independently and identically distributed over time (i.i.d.). 
Only the applied control inputs $u_k$ and measured outputs $y_k$ are available, whereas the latent state  $x_k$ is not known. 
The identification task is to determine the unknown model parameter $\theta\in\mathbb{R}^{n_\theta}$ from the input--output data. 

\begin{remark}[Model generality]
The transition map $f_{\mathrm d}$ may be obtained by integrating a continuous-time model
\begin{equation}
\label{eq:continuous-time-model}
        \dot{x}_{\tau}=f_{\mathrm{c}}(x_{\tau},u_{\tau},w_{\mathrm{c},\tau};\theta),
\end{equation}
where $f_{\mathrm{c}}$ describes the continuous-time dynamics and $w_{\mathrm{c}}$ corresponds to the continuous-time noise. Depending on the discretization, the discrete-time noise $w_k\in\mathbb{R}^{n_{\mathrm{w}}}$ accounts for $w_{\mathrm{c}}$ at multiple integration stages. Such a continuous-time formulation facilitates the incorporation of prior physical knowledge. The dynamics, represented either directly by $f_{\mathrm{d}}$ or through $f_{\mathrm{c}}$, may be parameterized as a physical model, a neural-network model, or a hybrid gray-box model. Accordingly, $\theta$ contains uncertain physical parameters in the purely physical case, neural-network parameters in the neural case, and both in the gray-box case. Neural and gray-box parameterizations can lead to large parameter dimensions, with $n_\theta\gg 1$, motivating computational methods that scale efficiently with $n_\theta$.
\end{remark}

We consider the following assumptions for the model.

\begin{assumption}[Known model structure and regularity]
The parametric structure of the dynamics $f_{\mathrm d}(\cdot;\theta)$ and the matrices $C\in\mathbb{R}^{n_{\mathrm{y}}\times n_{\mathrm{x}}}$ and $F\in\mathbb{R}^{n_{\mathrm{y}}\times n_\eta}$ are  known.
The dynamics $f_{\mathrm d}$ are twice continuously
differentiable.
\end{assumption}

\begin{assumption}[Final-state observability]
\label{ass:observable}
The model $(f_{\mathrm d},C)$ is final-state observable~\cite[Def.~ 4.29]{rawlings2020model} in $n_{\mathrm{o}}$ steps for all parameters $\theta\in\mathbb{R}^{n_{\mathrm{\theta}}}$ uniformly for all initial states and input trajectories. 
\end{assumption}

Assumption~\ref{ass:observable} implies that in the absence of noise and for known $\theta$, the current state is uniquely determined by the preceding $n_{\mathrm{o}}$ input and output measurements.

The available input--output data may be collected from one or multiple trajectories. The data are segmented into fixed-length, possibly overlapping windows, each containing $N$ consecutive inputs and the corresponding one-step-ahead outputs. The window length is chosen such that $N\geq n_{\mathrm{o}}$, allowing the terminal state to be determined under Assumption~\ref{ass:observable}. The resulting dataset consists of $M$ windows:
\begin{equation}
    \mathcal D
    =
    \left\{
    \left(
    \tilde u_{0:N-1}^{(i)},
    \tilde y_{1:N}^{(i)}
    \right)
    \right\}_{i=1}^{M}.
    \label{eq:windowed-dataset}
\end{equation}

\subsection{Preliminaries - Moving Horizon Estimation}
\label{subsec:mhe-preliminaries}

MHE is an optimization-based method for state estimation. In particular, given a known model $f_{\mathrm d}$, including known model parameters $\theta$, MHE uses the known past input sequence $\tilde u_{0:N-1}$ and output sequence $\tilde y_{1:N-1}$ to estimate the state trajectory $\hat x_{0:N}$ by solving the following optimization problem:
\begin{subequations}
\label{eq:inner-mhe}
\begin{equation}
    \begin{gathered}
    \min_{\hat{x},\hat{w},\hat{\eta}}
        % \mathcal{M}:=
        \left\lVert\hat x_0-\bar s\right\rVert_{\Sigma^{-1}}^2
        +\sum_{k=0}^{N-1}\left\lVert\hat w_k\right\rVert_{Q^{-1}}^2+\sum_{k=1}^{N-1}\left\lVert\hat\eta_k\right\rVert_{R^{-1}}^2
    \end{gathered}
    \label{eq:mhe-objective}
\end{equation}
\begin{align}
    &\text{s.t.}\notag\\[-0.2em]
    % \mathcal H_k^w:=
    &\hat x_{k+1}-f_{\mathrm d}(\hat x_k,\tilde u_k,\hat w_k;\theta)=0,
    \quad k=0,\ldots,N-1,
    \label{eq:mhe-dynamics-constraint}\\
    % \mathcal H_k^\eta:=
    &\tilde y_k-C\hat x_k-F\hat\eta_k=0,
    \quad k=1,\ldots,N-1.
    \label{eq:mhe-measurement-constraint}
\end{align}
\end{subequations}
with weighting matrices $Q\in\mathbb{R}^{n_{\mathrm{w}}\times n_{\mathrm{w}}}$, $R\in\mathbb{R}^{n_\eta\times n_\eta}$, and $\Sigma\in\mathbb{R}^{n_{\mathrm{x}}\times n_{\mathrm{x}}}$, where $Q,R,\Sigma\succ0$. This problem computes a trajectory that is consistent with the data and the model, using a quadratic objective that penalizes the estimated noise. This cost is related to a maximum-likelihood trajectory estimate under Gaussian noise \cite[Sec.~4]{simpson2023efficient}. The constant arrival-cost term $\left\lVert\hat x_0-\bar s\right\rVert_{\Sigma^{-1}}^2$, with  $\bar{s}\in\mathbb{R}^{n_{\mathrm{x}}}$, is not required for standard MHE, but plays an important role in the PEM--MHE framework to avoid biased parameter estimates \cite[Sec.~V.A]{simpson2024parallelizable}.

\begin{remark}
In contrast to classical recursive state estimators, such as the Kalman filter and extended Kalman filter (EKF), the fixed-window MHE formulation yields $M$ independent state-estimation problems. This independence allows the data windows to be processed separately, enabling both state estimation and subsequent gradient computations for parameter estimation to be performed in parallel \citep{simpson2024parallelizable}.
\end{remark}
\section{Method}
\label{sec:method}

In this section, we present the proposed methodology for estimating the unknown model parameters $\theta$ from the available dataset $\mathcal D$. First, Section~\ref{subsec:pem-mhe} presents the PEM--MHE identification framework based on~\cite{simpson2024parallelizable}. Building on this framework, Section~\ref{subsec:matrix-free-hypergradient} proposes a scalable, structure-exploiting reformulation for hypergradient computation, together with its theoretical properties.

\subsection{PEM-MHE Identification Framework}
\label{subsec:pem-mhe}

We follow the standard PEM approach of estimating the trainable parameters by minimizing prediction errors. These predictions are constructed using state estimates obtained from MHE, since the state is not directly measured.

For the outer PEM problem, let $\alpha\in\mathbb{R}^{n_\alpha}$ collect all parameters to be optimized during identification. In our formulation, $\alpha$ includes the model parameters $\theta$ and the arrival-cost parameters $\bar s$ and $\Sigma$. For the inner MHE problem, let $\mathrm v^{(i)}$ collect the decision variables $\hat x_{0:N}^{(i)}$, $\hat w_{0:N-1}^{(i)}$, and $\hat\eta_{1:N-1}^{(i)}$ for each data window $i$. We denote the corresponding objective \eqref{eq:mhe-objective}, dynamics constraints \eqref{eq:mhe-dynamics-constraint}, and measurement constraints \eqref{eq:mhe-measurement-constraint} by $\mathcal M^{(i)}(\mathrm v^{(i)},\alpha)$, $\mathcal H_k^{w,(i)}(\mathrm v^{(i)},\alpha)$, and $\mathcal H_k^{\eta,(i)}(\mathrm v^{(i)},\alpha)$, respectively.

With this notation, the PEM--MHE bilevel optimization problem is formulated as follows:
\begin{subequations}
\label{eq:pem-objective}
\begin{align}
    \alpha^*
    &\in\operatorname*{arg\,min}_{\alpha}
    \frac{1}{2}\sum_{i=1}^{M}
    \left\lVert C\hat x_N^{(i),*}-\tilde y_N^{(i)}\right\rVert^2
    \label{eq:pem-cost}\\
    \text{s.t.}\quad
    &\mathrm v^{(i),*}\in\operatorname*{arg\,min}_{\mathrm v^{(i)}}
    \mathcal M^{(i)}(\mathrm v^{(i)},\alpha),
    \quad i=1,\ldots,M,
    \label{eq:pem-mhe-solution}\\
    &\qquad \mathcal H_k^{w,(i)}(\mathrm v^{(i)},\alpha)=0,
    \quad k=0,\ldots,N-1,
    \label{eq:pem-mhe-dynamics}\\
    &\qquad \mathcal H_k^{\eta,(i)}(\mathrm v^{(i)},\alpha)=0,
    \quad k=1,\ldots,N-1.
    \label{eq:pem-mhe-measurement}
\end{align}
\end{subequations}
In \eqref{eq:pem-mhe-solution}--\eqref{eq:pem-mhe-measurement}, the inner MHE problem estimates each state trajectory using measurements up to time $N-1$. The terminal state $\hat x_N^{(i),*}$ yields the one-step-ahead output prediction $C\hat x_N^{(i),*}$, which is compared with the held-out measurement $\tilde y_N^{(i)}$ in \eqref{eq:pem-cost}. We minimize these prediction errors over all $M$ windows to estimate the parameter $\alpha^*$.

To solve the bi-level problem~\eqref{eq:pem-objective}, we replace the inner MHE global optimality requirement \eqref{eq:pem-mhe-solution}--\eqref{eq:pem-mhe-measurement} with its first-order KKT conditions. For compactness, let $\mathcal H(\mathrm v,\alpha)=0$ denote all stacked equality constraints in \eqref{eq:pem-mhe-dynamics}--\eqref{eq:pem-mhe-measurement}, and let $\mu$ denote the correspondingly Lagrange multiplier. 

The Lagrangian of the inner MHE problem is
\begin{equation}
    \mathcal L(\mathrm v,\mu,\alpha)
    :=\mathcal M(\mathrm v,\alpha)+\mu^\top\mathcal H(\mathrm v,\alpha),
    \label{eq:mhe-lagrangian}
\end{equation}
and the corresponding KKT conditions are
\begin{equation}
    \mathcal F(\mathrm v,\mu,\alpha)
    :=\begin{bmatrix}
        \nabla_{\mathrm v}\mathcal L(\mathrm v,\mu,\alpha)\\
        \mathcal H(\mathrm v,\alpha)
    \end{bmatrix}=0.
    \label{eq:mhe-kkt}
\end{equation}

This yields the following relaxed single-level optimization problem:
\begin{subequations}
\label{eq:pem-mhe-problem}
\begin{align}
    (\mathrm v^{(i),*},\mu^{(i),*},\alpha^*)
    &\in\operatorname*{arg\,min}_{(\mathrm v^{(i)},\mu^{(i)},\alpha)}
    \frac{1}{2}\sum_{i=1}^{M}
    \left\lVert C\hat x_N^{(i)}-\tilde y_N^{(i)}\right\rVert^2
    \label{eq:pem-mhe-objective}\\
    \text{s.t.}\quad
    &\mathcal F^{(i)}(\mathrm v,\mu,\alpha)=0,
    \quad i=1,\ldots,M.
    \label{eq:pem-mhe-kkt-constraint}
\end{align}
\end{subequations}

A key advantage of \eqref{eq:pem-mhe-problem} is parallel computation across the $M$ data windows.

\subsection{Scalable Hypergradient Computation}
\label{subsec:matrix-free-hypergradient}

To obtain the parameter estimate $\alpha^*$, we aim to solve \eqref{eq:pem-mhe-problem} using gradient-based optimization algorithms. For a given parameter $\alpha$, $\mathrm v^{(i)}$ and $\mu^{(i)}$ are regarded as implicit functions of $\alpha$ through the KKT constraint \eqref{eq:pem-mhe-kkt-constraint}. We denote the resulting terminal state estimate as $\hat x_N^{(i)}(\alpha)$. Thus, we can express the PEM objective as a function of $\alpha$:
\begin{equation}
    \mathcal J(\alpha)
    :=
    \frac{1}{2}\sum_{i=1}^{M}
    \left\lVert
    C\hat x_N^{(i)}(\alpha)-\tilde y_N^{(i)}
    \right\rVert^2,
    \label{eq:reduced-pem-objective}
\end{equation}
We refer to the total derivative $\frac{\mathrm d\mathcal J}{\mathrm d\alpha}$ as the hypergradient. Computing it requires the sensitivities of $\mathrm v^{(i)}$ and $\mu^{(i)}$ with respect to $\alpha$, obtained by implicitly differentiating \eqref{eq:pem-mhe-kkt-constraint}:
\begin{subequations}
\label{eq:kkt-implicit-differentiation}
\begin{align}
    &\frac{\mathrm d}{\mathrm d\alpha}
    \mathcal F^{(i)}(\mathrm v,\mu,\alpha)=0
    \label{eq:kkt-total-derivative}\\
    \Leftrightarrow\quad
    &\underbrace{\begin{bmatrix}
        \dfrac{\partial^2\mathcal L^{(i)}}{\partial(\mathrm v^{(i)})^2} &
        \left(\dfrac{\partial\mathcal H^{(i)}}{\partial\mathrm v^{(i)}}\right)^\top\\
        \dfrac{\partial\mathcal H^{(i)}}{\partial\mathrm v^{(i)}} & 0
    \end{bmatrix}}_{K^{(i)}}
    \begin{bmatrix}
        \dfrac{\mathrm d\mathrm v^{(i)}}{\mathrm d\alpha}\\
        \dfrac{\mathrm d\mu^{(i)}}{\mathrm d\alpha}
    \end{bmatrix}
    +
    \underbrace{\begin{bmatrix}
        \dfrac{\partial^2\mathcal L^{(i)}}{\partial\mathrm v^{(i)}\partial\alpha}\\
        \dfrac{\partial\mathcal H^{(i)}}{\partial\alpha}
    \end{bmatrix}}_{A^{(i)}}
    =0.
    \label{eq:kkt-sensitivity-system}
\end{align}
\end{subequations}
% [say something like K assumed non-singular, or what happens if singular]
For the subsequent analysis, we assume that $K^{(i)}$ is nonsingular at the MHE solution $(\mathrm v^{(i),*},\mu^{(i),*})$ for all $i$. This is a standard regularity assumption in nonlinear-program sensitivity analysis, also commonly imposed in differentiable Model Predictive Control (MPC) \citep{frey2025differentiable}, ensuring that the local MHE solution sensitivities with respect to $\alpha$ are uniquely determined.

Using this sensitivity relation in \eqref{eq:kkt-sensitivity-system}, with $r^{(i)}:=C\hat x_N^{(i)}-\tilde y_N^{(i)}$ denoting the prediction residual, the hypergradient follows from the chain rule as

\begin{align}
\label{eq:explicit-matrix-hypergradient}
\frac{\mathrm d\mathcal J}{\mathrm d\alpha}
    &=
    \sum_{i=1}^{M}
    \begin{bmatrix}
        \dfrac{\mathrm d\mathrm v^{(i)}}{\mathrm d\alpha}\\
        \dfrac{\mathrm d\mu^{(i)}}{\mathrm d\alpha}
    \end{bmatrix}^{\!\top}
    \begin{bmatrix}
        \left(\dfrac{\partial(C\hat x_N^{(i)})}{\partial\mathrm v^{(i)}}\right)^\top\\
        \left(\dfrac{\partial(C\hat x_N^{(i)})}{\partial\mu^{(i)}}\right)^\top
    \end{bmatrix}r^{(i)},
    \\
    &=
    -\sum_{i=1}^{M}
    \left(A^{(i)}\right)^\top
    \left(K^{(i)}\right)^{-1}
    \underbrace{\begin{bmatrix}
        \left(\dfrac{\partial(C\hat x_N^{(i)})}{\partial\mathrm v^{(i)}}\right)^\top r^{(i)}\\
        0
    \end{bmatrix}}_{b^{(i)}}.
    \nonumber
\end{align}

The explicit hypergradient expression in \eqref{eq:explicit-matrix-hypergradient} presents a computational challenge when the parameter dimension $n_\theta$ is large. In particular, $A^{(i)}$ contains $n_\alpha\geq n_\theta$ columns and for neural and gray-box parameterizations we typically have $n_\theta\gg1$. Thus, the explicit construction of $A^{(i)}$ requires a large amount of memory and the complexity of the associated multiplication increases with $n_\theta$. To address this parameter-scaling bottleneck, the following proposition reformulates the hypergradient computation.

\begin{prop}[Matrix-free adjoint--VJP hypergradient]
\label{prop:matrix-free-hypergradient}
\mbox{}\par\noindent
The exact hypergradient in \eqref{eq:explicit-matrix-hypergradient} can be reformulated as
\begin{equation}
    \frac{\mathrm d\mathcal J}{\mathrm d\alpha}
    =-\sum_{i=1}^{M}
    \nabla_\alpha\!\left[
        \left(\mathcal F^{(i)}(\mathrm v^{(i)},\mu^{(i)},\alpha)\right)^\top\lambda^{(i)}
    \right],
    \label{eq:matrix-free-hypergradient}
\end{equation}
where the adjoint vector $\lambda^{(i)}$ is obtained from
\begin{equation}
    K^{(i)}\lambda^{(i)} = b^{(i)}.
    \label{eq:adjoint-system}
\end{equation}

% In \eqref{eq:matrix-free-hypergradient}, $\mathrm v^{(i)}$, $\mu^{(i)}$, and $\lambda^{(i)}$ are held fixed when evaluating the gradient with respect to $\alpha$.
\end{prop}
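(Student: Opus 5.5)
The plan is to start from the second line of \eqref{eq:explicit-matrix-hypergradient}, which expresses the hypergradient as $-\sum_i (A^{(i)})^\top (K^{(i)})^{-1} b^{(i)}$. Two facts are needed. First, $K^{(i)}$ is symmetric. Second, $(A^{(i)})^\top\lambda$ coincides with the gradient with respect to $\alpha$ of the scalar $(\mathcal F^{(i)})^\top\lambda$, with $\mathrm v^{(i)},\mu^{(i)},\lambda$ held fixed. The argument is carried out window by window and then summed, since every term depends only on quantities of window $i$.

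\emph{Step 1 (symmetry and the adjoint).} The block $\partial^2\mathcal L^{(i)}/\partial(\mathrm v^{(i)})^2$ is a Hessian of a $C^2$ function, so it is symmetric by Schwarz's theorem. This uses Assumption~1 ($f_{\mathrm d}\in C^2$) and the fact that $\mathcal M$ is quadratic. The off-diagonal blocks are transposes of each other by construction, so $K^{(i)}=(K^{(i)})^\top$. Under the standing nonsingularity assumption on $K^{(i)}$, the adjoint $\lambda^{(i)}:=(K^{(i)})^{-1}b^{(i)}$ is well defined and solves \eqref{eq:adjoint-system}. Moreover $(K^{(i)})^{-1}b^{(i)}=(K^{(i)})^{-\top}b^{(i)}$, so
\begin{equation*}
-(A^{(i)})^\top (K^{(i)})^{-1}b^{(i)}=-(A^{(i)})^\top\lambda^{(i)} .
\end{equation*}
Strictly, the symmetry is only needed to match the stated adjoint system. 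Without it one would solve with $(K^{(i)})^\top$.

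\emph{Step 2 (identify $A^\top\lambda$ as a VJP).} By \eqref{eq:mhe-kkt} and the definition in \eqref{eq:kkt-sensitivity-system}, $A^{(i)}$ is exactly the partial Jacobian $\partial\mathcal F^{(i)}/\partial\alpha$. Its first block row is $\partial(\nabla_{\mathrm v}\mathcal L^{(i)})/\partial\alpha=\partial^2\mathcal L^{(i)}/\partial\mathrm v\partial\alpha$, and its second is $\partial\mathcal H^{(i)}/\partial\alpha$. For fixed $(\mathrm v^{(i)},\mu^{(i)},\lambda^{(i)})$, the map $\alpha\mapsto(\mathcal F^{(i)})^\top\lambda^{(i)}$ is scalar. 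Its gradient is
\begin{equation*}
\nabla_\alpha\big[(\mathcal F^{(i)})^\top\lambda^{(i)}\big]=\Big(\frac{\partial\mathcal F^{(i)}}{\partial\alpha}\Big)^{\!\top}\lambda^{(i)}=(A^{(i)})^\top\lambda^{(i)} .
\end{equation*}
Summing over $i$ and inserting the result of Step 1 gives \eqref{eq:matrix-free-hypergradient}.

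\emph{Main obstacle.} The point needing most care is bookkeeping rather than analysis. The gradient in \eqref{eq:matrix-free-hypergradient} must be read as a partial gradient, with $\mathrm v^{(i)},\mu^{(i)},\lambda^{(i)}$ frozen at the MHE solution. The total derivative would wrongly include the implicit dependence of $\mathrm v^{(i)}$ and $\mu^{(i)}$ on $\alpha$, and I would state this convention explicitly.

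A related check concerns the notation of \eqref{eq:kkt-sensitivity-system}. There, $\partial^2\mathcal L/\partial\mathrm v\partial\alpha$ is shorthand for the Jacobian of $\nabla_{\mathrm v}\mathcal L$ with respect to $\alpha$, of size $\dim\mathrm v\times n_\alpha$. One must confirm this is the correct orientation for the matrix identity above.

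The computational remark then follows directly. The right-hand side of \eqref{eq:matrix-free-hypergradient} is a single reverse-mode VJP, so $A^{(i)}$ is never formed.
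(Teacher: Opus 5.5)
Your proof is correct and follows the paper's argument: substitute $\lambda^{(i)}=(K^{(i)})^{-1}b^{(i)}$ into the second line of \eqref{eq:explicit-matrix-hypergradient}, then recognize $(A^{(i)})^\top\lambda^{(i)}=\nabla_\alpha[(\mathcal F^{(i)})^\top\lambda^{(i)}]$ with $\mathrm v^{(i)},\mu^{(i)},\lambda^{(i)}$ held fixed. Your symmetry argument for $K^{(i)}$ is not needed once you start from that second line, but it does correctly justify the passage from the first to the second line of \eqref{eq:explicit-matrix-hypergradient}, which the paper uses without comment.
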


\begin{pf}
For each window $i$, since $K^{(i)}$ is nonsingular, \eqref{eq:adjoint-system} gives $\lambda^{(i)}=(K^{(i)})^{-1}b^{(i)}$. Substitution into the  chain-rule expression \eqref{eq:explicit-matrix-hypergradient} yields
\begin{equation*}
    \frac{\mathrm d\mathcal J}{\mathrm d\alpha}
    =-\sum_{i=1}^{M}\left(A^{(i)}\right)^\top\lambda^{(i)}.
\end{equation*}
Since $A^{(i)}=\frac{\partial\mathcal F^{(i)}(\mathrm v^{(i)},\mu^{(i)},\alpha)}{\partial\alpha}$, $\mathrm v^{(i)}$, $\mu^{(i)}$ and $\lambda^{(i)}$ are fixed, we can equivalently write
\begin{equation*}
    \left(A^{(i)}\right)^\top\lambda^{(i)}
    =
    \nabla_\alpha\!\left[
        % \left(\mathcal F^{(i)}\right)^\top\lambda^{(i)}
        \left(\mathcal F^{(i)}(\mathrm v^{(i)},\mu^{(i)},\alpha)\right)^\top\lambda^{(i)}
    \right].
\end{equation*}
Therefore, \eqref{eq:matrix-free-hypergradient} follows. $\square$
\end{pf}

Our reformulation replaces the explicit matrix computation in \eqref{eq:explicit-matrix-hypergradient} with two VJP evaluations and one linear-system solve. The first VJP computes $\nabla_{\mathrm v}\mathcal L$ in $\mathcal F$, the linear-system solve \eqref{eq:adjoint-system} yields the adjoint $\lambda^{(i)}$, and the second VJP computes the gradient of the scalar $(\mathcal F^{(i)})^\top\lambda^{(i)}$ with respect to $\alpha$.

By avoiding the explicit construction and storage of the $n_\alpha$-column matrix $A^{(i)}$, the proposed formulation removes this parameter-scaling bottleneck and substantially reduces the computational and memory requirements for neural and gray-box parameterizations.

We next consider how to solve the adjoint system \eqref{eq:adjoint-system} efficiently. To analyze its computational complexity, let
\begin{equation}
s:=2n_{\mathrm x}+n_{\mathrm y}+n_w+n_\eta
\end{equation}
denote the maximum dimension of the adjoint variables associated with a single MHE stage. Treating the adjoint linear system in \eqref{eq:adjoint-system} as dense requires $\mathcal O(N^3s^3)$ time and $\mathcal O(N^2s^2)$ memory, presenting a computational challenge when a longer horizon $N$ is required to incorporate longer history. To address this horizon-scaling bottleneck, the following proposition exploits the temporal structure of the inner MHE problem.

\begin{prop}[Structure-exploiting adjoint solve]
\label{prop:structured-adjoint-solve}
\mbox{}\par\noindent
There exists a permutation matrix $P$ that reorders the adjoint as $\bar\lambda:=P\lambda$, such that $\bar K:=PKP^\top$ is block tridiagonal. With $\bar b:=Pb$, the reordered adjoint system $\bar K\bar\lambda=\bar b$ can be solved in $\mathcal O(Ns^3)$ time and in $\mathcal O(Ns^2)$ memory.
\end{prop}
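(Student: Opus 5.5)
We prove the claim by writing down the sparsity of $K$ explicitly, choosing a stage-wise ordering of the unknowns that makes it block tridiagonal, and then invoking block LU (block Thomas) elimination. Throughout we drop the window index $(i)$, since each window is treated independently.

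\emph{Step 1: sparsity of $K$.} The MHE objective \eqref{eq:mhe-objective} is separable across stages. It is a sum of terms in $\hat x_0$, in $\hat w_k$, and in $\hat\eta_k$, each involving a single stage. The dynamics constraint \eqref{eq:mhe-dynamics-constraint} at stage $k$ couples only $(\hat x_k,\hat w_k)$ with $\hat x_{k+1}$. The measurement constraint \eqref{eq:mhe-measurement-constraint} at stage $k$ involves only $(\hat x_k,\hat\eta_k)$. Let $\mu^w_k\in\mathbb R^{n_{\mathrm x}}$ and $\mu^\eta_k\in\mathbb R^{n_{\mathrm y}}$ be the associated multipliers. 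Then the Lagrangian Hessian $\partial^2\mathcal L/\partial\mathrm v^2$ is block diagonal over stages. The block for stage $k$ contains $\partial^2/\partial(\hat x_k,\hat w_k)^2$ of $(\mu^w_k)^\top f_{\mathrm d}$, plus the constant weights $2Q^{-1}$, $2R^{-1}$, $2\Sigma^{-1}$. The constraint Jacobian $\partial\mathcal H/\partial\mathrm v$ has the following nonzero blocks:
\begin{itemize}
\item in row $\mathcal H^w_k$: the columns of $\hat x_k$ and $\hat w_k$ (the blocks $-\partial f_{\mathrm d}/\partial x$ and $-\partial f_{\mathrm d}/\partial w$), and the column of $\hat x_{k+1}$ (an identity block);
\item in row $\mathcal H^\eta_k$: the columns of $\hat x_k$ and $\hat\eta_k$.
\end{itemize}

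\emph{Step 2: the permutation.} Define the stage-$k$ group
\[
z_k:=\operatorname{col}(\hat x_k,\hat w_k,\hat\eta_k,\mu^\eta_k,\mu^w_k).
\]
Components that do not exist at the boundary stages are omitted: stage $0$ has no $\hat\eta_0,\mu^\eta_0$, and stage $N$ contains only $\hat x_N$. Each group then has dimension at most $2n_{\mathrm x}+n_{\mathrm w}+n_\eta+n_{\mathrm y}=s$. Let $P$ be the permutation that places $z_0,z_1,\dots,z_N$ consecutively in both the primal--dual unknown and the equation ordering of $K$.

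By Step 1, the only nonzeros outside the diagonal blocks of $\bar K=PKP^\top$ come from the identity entry linking $\hat x_{k+1}$ with $\mu^w_k$. This entry appears in block $(k,k+1)$ and, by symmetry of $K$, in block $(k+1,k)$. Every other entry of $K$ involves variables of a single stage. Hence $\bar K$ is block tridiagonal with $N+1$ diagonal blocks of size at most $s\times s$. We also have $\bar K\bar\lambda=PKP^\top P\lambda=Pb=\bar b$, because $P^\top P=I$.

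\emph{Step 3: solving the system.} We apply block LU factorization to $\bar K$:
\[
S_0=\bar K_{00},\qquad S_{k+1}=\bar K_{k+1,k+1}-\bar K_{k+1,k}S_k^{-1}\bar K_{k,k+1},
\]
followed by forward and backward substitution. Each step factorizes one $s\times s$ Schur complement and performs a constant number of $s\times s$ matrix products, at a cost of $\mathcal O(s^3)$. Over $N+1$ stages the total time is $\mathcal O(Ns^3)$. Storing the blocks, the factors, and the right-hand side takes $\mathcal O(Ns^2)$ memory.

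\emph{Main obstacle.} The delicate point is that every Schur complement $S_k$ must be invertible, because the recursion divides by $S_k$. Nonsingularity of $K$, as assumed, does not by itself guarantee this, and since $K$ is indefinite it need not hold in general.

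We would handle it in one of two ways:
\begin{itemize}
\item \textbf{Alternative elimination order.} Eliminate the $\hat w_k,\hat\eta_k$ unknowns first, using their positive definite Hessian blocks $2Q^{-1},2R^{-1}$, reducing to a Riccati-like recursion on $(\hat x_k,\mu^w_k)$. Its invertibility would follow from the second-order sufficient conditions at the MHE solution.
\item \textbf{Pivoting.} Alternatively, use block factorization with pivoting restricted within each $2s$-wide band. Band LU with partial pivoting preserves a bandwidth of order $s$, so the $\mathcal O(Ns^3)$ time and $\mathcal O(Ns^2)$ memory bounds hold for any nonsingular $\bar K$.
\end{itemize}
We would present the banded-pivoting argument as the rigorous route, since it needs only the assumed nonsingularity of $K$.
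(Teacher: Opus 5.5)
Your Steps 1--2 are the paper's argument. You use the same stage-wise grouping (you only swap $\mu^\eta_k$ and $\mu^w_k$ inside a block). You also use the same locality reasoning, which shows that the only inter-stage coupling in $\bar K$ is the identity linking $\hat x_{k+1}$ and $\mu^w_k$.

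You go beyond the paper in the solve step. The paper simply invokes block factorization such as Riccati recursion and does not ask whether the blocks it inverts are nonsingular. You correctly observe that nonsingularity of $K$ does not guarantee this. Already $\bar K_{00}$, the saddle-point block over $(\hat x_0,\hat w_0,\mu^w_0)$, is singular whenever $[\partial f_{\mathrm d}/\partial x,\ \partial f_{\mathrm d}/\partial w]$ loses row rank at stage $0$. Meanwhile $K$ stays nonsingular thanks to the identity block in the $\hat x_1$ column.

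Your fallback is sound. Gaussian elimination with partial pivoting on a matrix of size $\mathcal O(Ns)$ with scalar bandwidth at most $2s-1$ never breaks down when $\bar K$ is nonsingular. It costs $\mathcal O(Ns\cdot s\cdot s)=\mathcal O(Ns^3)$ time and $\mathcal O(Ns^2)$ storage. So your version proves the proposition under exactly the paper's stated assumption. The paper's Riccati route is the natural efficient choice, but as written it tacitly needs an additional condition, for example a second-order sufficient condition, for the recursion to be well defined.

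One correction to your first alternative. The Hessian block of $\hat w_k$ is $2Q^{-1}-\sum_j[\mu^w_k]_j\nabla^2_{ww}f_{\mathrm d,j}$, and it couples to $\hat x_k$. It therefore equals $2Q^{-1}$ only when $f_{\mathrm d}$ is affine in $w$. That fails for the multi-stage integrators described in the paper's remark on model generality. Since you rest the proof on the pivoting argument, this does not affect its validity.
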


\begin{pf}
Following the block structure of $\mathcal F$ in \eqref{eq:mhe-kkt}, let $\lambda^{\mathcal L}_{x_k}$, $\lambda^{\mathcal L}_{w_k}$, and $\lambda^{\mathcal L}_{\eta_k}$ denote the adjoint components corresponding to the entries of $\nabla_{\mathrm v}\mathcal L$ associated with $\hat x_k$, $\hat w_k$, and $\hat\eta_k$, respectively. Similarly, let $\lambda^{\mathcal H}_{w_k}$ and $\lambda^{\mathcal H}_{\eta_k}$ denote the adjoint components corresponding to $\mathcal H_k^w$ and $\mathcal H_k^\eta$. We then reorder these components in temporal order as
\begin{align}
    \bar\lambda_0
    &:=\operatorname{col}\!\left(
        \lambda^{\mathcal L}_{x_0},
        \lambda^{\mathcal L}_{w_0},
        \lambda^{\mathcal H}_{w_0}
    \right),\notag\\
    \bar\lambda_k
    &:=\operatorname{col}\!\left(
        \lambda^{\mathcal L}_{x_k},
        \lambda^{\mathcal L}_{w_k},
        \lambda^{\mathcal L}_{\eta_k},
        \lambda^{\mathcal H}_{w_k},
        \lambda^{\mathcal H}_{\eta_k}
    \right),
    \quad k=1,\ldots,N-1,\notag\\
    \bar\lambda_N
    &:=\lambda^{\mathcal L}_{x_N},\notag\\
    \bar\lambda
    &:=\operatorname{col}(\bar\lambda_0,\ldots,\bar\lambda_N).
    \label{eq:temporal-adjoint-ordering}
\end{align}

This reordering can be represented by a permutation matrix $P$ as $\bar\lambda=P\lambda$.
% The MHE problem \eqref{eq:inner-mhe} has the same multi-stage structure as MPC~\citep{diehl2009efficient}.
At sampling instant $k$, the objective term in \eqref{eq:mhe-objective} is stage-local, the dynamics constraint \eqref{eq:mhe-dynamics-constraint} involves only $(\hat x_k,\hat w_k,\hat x_{k+1})$, and the measurement constraint \eqref{eq:mhe-measurement-constraint} involves only $(\hat x_k,\hat\eta_k)$. Therefore, the KKT conditions \eqref{eq:mhe-kkt} associated with time $k$ depend only on variables at adjacent time stages $k-1$, $k$, and $k+1$. By the definition of $K$ in \eqref{eq:kkt-sensitivity-system}, the blocks coupling nonadjacent time stages are zero and lie outside the main block diagonal and first block off-diagonals of $\bar K$. Hence, $\bar K$ is block tridiagonal. 
The block-tridiagonal system $\bar K\bar\lambda=\bar b$ can be solved using block-factorization methods, such as Riccati recursion~\citep{diehl2009efficient}, in $\mathcal O(Ns^3)$ time and $\mathcal O(Ns^2)$ memory. $\square$
\end{pf}

This linear scaling in $N$ is comparable to that of structure-exploiting solvers used for MHE and MPC schemes.

Overall, the proposed formulation provides a scalable PEM--MHE identification approach while preserving the exact hypergradient and remaining parallelizable across the $M$ data windows. It supports high-dimensional parameterizations and improves scalability to longer estimation horizons.
\section{Experiments}
\label{sec:experiments}

In this section, we evaluate the proposed scalable extension of PEM--MHE in terms of computational efficiency, parameter recovery, and prediction accuracy. We consider the identification of nonlinear racing-vehicle dynamics under partial state observation, which provides a challenging benchmark due to the strongly nonlinear tire-force dynamics and the unmeasured lateral velocity. We first evaluate the computational scaling of the proposed hypergradient reformulation (Sec.~\ref{subsec:experiments_compute}). We then evaluate parameter recovery and prediction accuracy using simulated vehicle data provided by \cite{chrosniak2024deep} and generated with BayesRace~\citep{jain2020bayesrace}  (Sec.~\ref{subsec:simulation-results}). Lastly, we validate the approach on real-world data collected from an Indy Autonomous Challenge racecar~\citep{kulkarni2023racecar} and processed by \cite{chrosniak2024deep}  (Sec.~\ref{subsec:realworld-results}).

\textit{Implementation details:} 
The MHE problem~\eqref{eq:inner-mhe} is formulated in CasADi, with neural models integrated via L4CasADi~\citep{salzmann2023neural,salzmann2024l4casadi}, and solved using acados~\citep{verschueren2022acados}. The adjoint--VJP hypergradient~\eqref{eq:matrix-free-hypergradient} is implemented using PyTorch~\citep{paszke2019pytorch}. Bounded parameters are reparameterized in terms of unconstrained parameters using softplus or sigmoid mappings for one-side and two-sided bounds, respectively. All unconstrained parameters are optimized using Adam~\citep{kingma2014adam}. 
The proposed implementation and all numerical values are available online: \\{\url{https://github.com/YiboFrankZhou/Scalable-PEM-MHE}}.

\subsection{Setup}
\label{subsec:tire-dynamics}

We define the vehicle state and input as
\begin{equation}
\label{eq:vehicle-state-input}
    x:=\begin{bmatrix}v_{\mathrm{x}}&v_{\mathrm{y}}&\dot\varphi\end{bmatrix}^{\top},
    \quad
    u:=\begin{bmatrix}\delta&\gamma\end{bmatrix}^{\top},
\end{equation}
where $v_{\mathrm{x}}\in\mathbb{R}$ and $v_{\mathrm{y}}\in\mathbb{R}$ are the body-frame longitudinal and lateral velocities, $\dot\varphi\in\mathbb{R}$ is the yaw rate, $\delta\in\mathbb{R}$ is the steering angle, and $\gamma\in\mathbb{R}$ is the throttle command, see also the illustration in Fig.~\ref{fig:tire-convention}.
%
%Among the three states, $v_{\mathrm{x}}$ and $\dot\varphi$ are measured, whereas $v_{\mathrm{y}}$ is unmeasured. The measurement model is therefore given by
We have noisy measurements of the longitudinal velocity and the yaw rate:
\begin{equation}
    y_k=\begin{bmatrix} v_{\mathrm{x}} \\ \dot\varphi  \end{bmatrix}+\eta_k.
    \label{eq:vehicle-output}
\end{equation}
The dynamics are given by
\begin{subequations}
\begin{align}
    \dot x_{\tau}&=f(x_{\tau},u_{\tau},z_{\tau};\theta)+w_{\tau}
    \label{eq:vehicle-continuous-dynamics}\\
    z_{\tau}&=h(x_{\tau},u_{\tau};\theta).
    \label{eq:vehicle-component-model}
\end{align}
\end{subequations}
Here, $f$ denotes a dynamic single-track vehicle model~\citep{chrosniak2024deep}, with $z_\tau$ collecting the tire-frame longitudinal and lateral forces at the equivalent front and rear axles:
\begin{equation}
    z_\tau:=
    \begin{bmatrix}
        F_{\mathrm{x},\tau}^{\mathrm f}&
        F_{\mathrm{y},\tau}^{\mathrm f}&
        F_{\mathrm{x},\tau}^{\mathrm r}&
        F_{\mathrm{y},\tau}^{\mathrm r}
    \end{bmatrix}^{\top}\in\mathbb{R}^4.
    \label{eq:vehicle-force-vector}
\end{equation}
where $\{\mathrm f,\mathrm r\}$ denotes the front and rear axles.

We consider three model classes for $h$:
\begin{subequations}
\label{eq:vehicle-force-models}
\begin{itemize}
    \item \textbf{Physics-based model:}
    \begin{equation}
        h_{\mathrm{phys}}(x,u;\theta_{\mathrm{phys}})
        :=
        \begin{bmatrix}
            0&
            F_{\mathrm{Magic}}^{\mathrm f}&
            F_{\mathrm{prop}}&
            F_{\mathrm{Magic}}^{\mathrm r}
        \end{bmatrix}^{\top};
        \label{eq:vehicle-physical-forces}
    \end{equation}

    \item \textbf{Black-box neural-network model:}
    \begin{equation}
        h_{\mathrm{NN}}(x,u;\theta_{\mathrm{NN}})
        :=
        \mathrm{NN}(x,u;\theta_{\mathrm{NN}});
        \label{eq:vehicle-nn-forces}
    \end{equation}

    \item \textbf{Physics--neural gray-box model:}
    \begin{equation}
        h_{\mathrm{gray}}(x,u;\theta_{\mathrm{gray}})
        :=
        h_{\mathrm{phys}}(x,u;\theta_{\mathrm{phys}})
        +
        h_{\mathrm{NN}}(x,u;\theta_{\mathrm{NN}}).
        \label{eq:vehicle-gray-forces}
    \end{equation}
\end{itemize}
\end{subequations}

\begin{comment}
We consider three model classes for $h$: a physics-based model $h_{\mathrm{phys}}$, a black-box neural-network model $h_{\mathrm{NN}}$, and a physics-neural gray-box model $h_{\mathrm{gray}}$ combining the two:
\begin{subequations}
\label{eq:vehicle-force-models}
\begin{align}
h_{\mathrm{phys}}(x,u;\theta_{\mathrm{phys}})&:=\begin{bmatrix}0&F_{\mathrm{Magic}}^{\mathrm f}&F_{\mathrm{prop}}&F_{\mathrm{Magic}}^{\mathrm r}\end{bmatrix}^{\top},
\label{eq:vehicle-physical-forces}\\
h_{\mathrm{NN}}(x,u;\theta_{\mathrm{NN}})&:=\mathrm{NN}(x,u;\theta_{\mathrm{NN}}),
\label{eq:vehicle-nn-forces}\\
h_{\mathrm{gray}}(x,u;\theta_{\mathrm{gray}})&:=h_{\mathrm{phys}}(x,u;\theta_{\mathrm{phys}})+h_{\mathrm{NN}}(x,u;\theta_{\mathrm{NN}}).
\label{eq:vehicle-gray-forces}
\end{align}
\end{subequations}
\end{comment}

The physical model $h_{\mathrm{phys}}$ uses the Pacejka Magic Formula tire model~\citep{pacejka1992magic} for the front and rear lateral forces $F_{\mathrm{Magic}}^{\mathrm f}$ and $F_{\mathrm{Magic}}^{\mathrm r}$, and the drivetrain model~\citep{jain2020bayesrace} for the rear longitudinal force $F_{\mathrm{prop}}$. The resulting unknown physical parameters are collected in $\theta_{\mathrm{phys}}\in\mathbb{R}^{16}$.

\begin{comment}
as
\begin{equation}
\begin{aligned}
\theta_{\mathrm{phys}}:=\left[\right.&B^{\mathrm f},C^{\mathrm f},D^{\mathrm f},E^{\mathrm f},S_{\mathrm H}^{\mathrm f},S_{\mathrm V}^{\mathrm f},\\
&B^{\mathrm r},C^{\mathrm r},D^{\mathrm r},E^{\mathrm r},S_{\mathrm H}^{\mathrm r},S_{\mathrm V}^{\mathrm r},\\
&C_{\mathrm m1},C_{\mathrm m2},C_{\mathrm r0},C_{\mathrm r2}\left.\right]^{\top}
\in\mathbb{R}^{16}.
\end{aligned}
\label{eq:vehicle-physical-parameters}
\end{equation}
\end{comment}

The neural-network model $h_{\mathrm{NN}}$ uses a multilayer perceptron (MLP) with two hidden layers and $\tanh$ activation functions, parameterized by $\theta_{\mathrm{NN}}\in\mathbb{R}^{n_{\mathrm{NN}}}$.

Finally, the gray-box model $h_{\mathrm{gray}}$ combines the above physical and neural models, with the parameters collected as
\begin{equation}
\theta_{\mathrm{gray}}:=\begin{bmatrix}\theta_{\mathrm{phys}}^{\top}&\theta_{\mathrm{NN}}^{\top}\end{bmatrix}^{\top}
\in\mathbb{R}^{16+n_{\mathrm{NN}}}.
\label{eq:vehicle-gray-parameters}
\end{equation}

% [todo: remark: For the graybox nn, last layer weights are initially set to be 0.]

\begin{figure}[t]
    \centering
    \includegraphics[width=0.65\columnwidth]{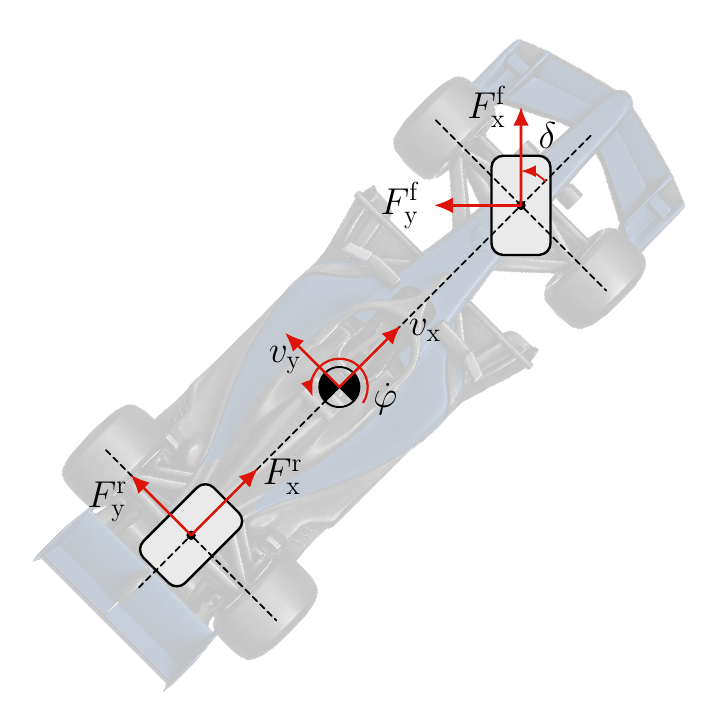}
    \caption{Vehicle geometry and modeling conventions. Schematic adapted from \cite{chrosniak2024deep}.}
    \label{fig:tire-convention}
\end{figure}

\subsection{Computational analysis}
\label{subsec:experiments_compute}
This section compares the proposed adjoint--VJP hypergradient formulation in \eqref{eq:matrix-free-hypergradient} with the explicit matrix-based formulation following \citet{simpson2024parallelizable} in \eqref{eq:explicit-matrix-hypergradient}, evaluating the computational speedup as the model parameter dimension $n_{\theta}$ increases and verifying their numerical equivalence. 
To this end, we use the neural-network model $h_{\mathrm{NN}}$ and vary the neural-network parameter dimension $n_{\mathrm{NN}}$. The final layer of the neural network is initialized to zero to avoid random initial outputs and facilitate reliable convergence of the MHE solver at initialization. We measure only the hypergradient evaluation time, with $\alpha$ and the corresponding KKT solution $(\mathrm v,\mu)$ fixed. The MHE horizon is fixed at $N=40$.

\begin{figure}[t]
    \centering
    \caption{Median hypergradient computation time over 100 MHE windows versus NN parameter count $n_{\mathrm NN}$.}
    \label{fig:hypergradient_runtime_scaling_param}
    \includegraphics[width=0.95\columnwidth]{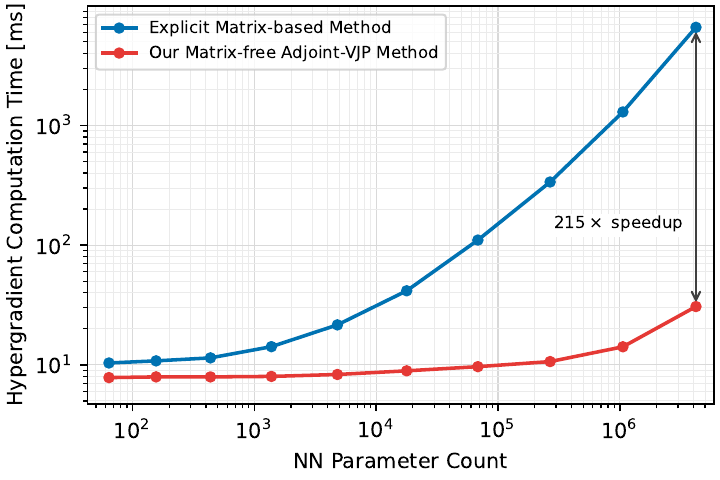}
\end{figure}

\begin{comment}
\begin{figure}[H]
    \centering
    \caption{Median adjoint computation time versus MHE horizon.}
    \label{fig:hypergradient_runtime_scaling_horizon}
    \includegraphics[width=0.95\columnwidth]{figures/Ex1_2_adjoint_horizon_scaling.pdf}
\end{figure}
\end{comment}

The results are shown in Fig.~\ref{fig:hypergradient_runtime_scaling_param}. The proposed adjoint--VJP formulation achieves increasing speedups as $n_{\mathrm{NN}}$ grows, reaching $215\times$ at 4.22 million parameters. We also verified that the resulting hypergradients are virtually indistinguishable, with a maximum relative $2$-norm error of $9.80\times10^{-7}$.

\subsection{Simulation: Parameter Recovery and Prediction}
\label{subsec:simulation-results}

\newcommand{\tmin}[1]{\cellcolor{ForestGreen!20}\bfseries\boldmath $#1$}
\newcommand{\tmax}[1]{\cellcolor{Red!18}$#1$}

\begin{table*}[!t]
    \centering
    \caption{Identification results on vehicle data simulated with BayesRace, comparing one-step prediction error, physical-parameter recovery, and training time. Best and worst error values are highlighted in green and red, respectively.}
    \label{tab:simulation}
    \small
    \setlength{\tabcolsep}{0pt}
    \begin{tabular}{@{}>{\raggedright\arraybackslash}p{.19\textwidth}@{\hspace{.0075\textwidth}}
                    >{\centering\arraybackslash}p{.115\textwidth}@{}>{\centering\arraybackslash}p{.115\textwidth}
                    @{\hspace{.0075\textwidth}}>{\centering\arraybackslash}p{.19\textwidth}
                    @{\hspace{.0075\textwidth}}>{\centering\arraybackslash}p{.105\textwidth}@{}}
        \toprule
        \multirow{2}{*}[-.8ex]{Method}
        & \multicolumn{2}{c}{One-Step Prediction Error}
        & \multirow{2}{*}[-1ex]{\shortstack{Average Relative\\ Physical-Parameter Error}}
        & \multirow{2}{*}[-1ex]{\shortstack{Training\\Time}} \\
        \cmidrule(l{4pt}r{.0075\textwidth}){2-3}
        & Train & Validation & & \\
        \midrule
        Ground True Parameters & $6.27\times10^{-14}$ & $9.71\times10^{-14}$ & $0$ & -- \\
        \midrule
        SUBNET & \tmax{2.85\times10^{-2}} & \tmax{1.66\times10^{-1}} & -- & 2 min \\
        Joint-MHE & $1.37\times10^{-2}$ & $4.42\times10^{-3}$ & $69.05\%$ & 7 min \\
        EKF$\rightarrow$PEM & $1.33\times10^{-4}$ & $6.49\times10^{-5}$ & $22.99\%$ & 1 min \\
        EKF$\rightarrow$PEM (adaptive) & $1.60\times10^{-4}$ & $9.52\times10^{-5}$ & \tmax{161.93\%} & 26 min \\
        PEM-MHE (physics-only) & \tmin{7.15\times10^{-7}} & \tmin{7.93\times10^{-7}} & \tmin{7.50\%} & 4 h 41 min \\
        % PEM-MHE (physics--NN) & $7.43\times10^{-7}$ & $7.88\times10^{-7}$ & $4.13$ & 15 h 59 min \\
        \bottomrule
    \end{tabular}
\end{table*}

This section evaluates the parameter recovery and prediction accuracy of the proposed PEM--MHE framework against other identification methods.

In this simulation setting, the data-generating model structure is known and matches that of the physical model $h_{\mathrm{phys}}$ used for identification. Accordingly, PEM--MHE and all structured-model baselines use the same physical model $h_{\mathrm{phys}}$ in this experiment, without the need for a gray-box extension. The ground-truth parameter values are also known, allowing direct evaluation of the estimated physical parameters' accuracy.

The following algorithms are compared with the proposed PEM--MHE framework:
\begin{itemize}
    \item \textbf{SUBNET}~\citep{beintema2023deep} is a black-box neural identification method that propagates a learned latent state. Hence, the learned latent states do in general not align with the physical state coordinates. For comparison with the other methods, we set the latent-state dimension to three, the true physical dimension.
    \item \textbf{Joint-MHE}~\citep{kuhl2011real} jointly estimates the states and model parameters by minimizing the MHE objective.
    \item \textbf{EKF$\rightarrow$PEM} follows a two-stage identification strategy~\citep[Ch.~18]{tedrake2023underactuated}, first estimating the unmeasured states using an EKF based on an initial model, and then identifying the model parameters by minimizing the one-step prediction error while treating the state estimates as ground truth.
    \item \textbf{EKF$\rightarrow$PEM (adaptive)}~\citep{chrosniak2024deep} similarly treats an EKF state estimate as ground truth and minimizes the one-step prediction error. Instead of estimating a constant parameter vector, an NN is trained to map a window of past state estimates and inputs to a time-varying parameter estimate.
    %uses the same EKF preprocessing and terminal prediction objective. Instead of a constant parameter vector, an encoder maps a window of past state estimates and inputs to time-varying parameters. These parameters are used $h_{\mathrm{phys}}$ for the terminal prediction.
\end{itemize}

For a fair and consistent comparison, all prediction-based methods use the first $N-1$ steps of each data window to predict the terminal output $\hat y_N$, with the one-step prediction error defined as its mean squared error against the measurement $\tilde y_N$ over all windows. The relative error for each physical parameter $\theta_{\mathrm{phys},j}$ is computed as
\begin{equation}
    \frac{|\hat\theta_{\mathrm{phys},j}-\theta_{\mathrm{phys},j}^*|}{|\theta_{\mathrm{phys},j}^*|},
\end{equation}
where $\theta_{\mathrm{phys},j}^*$ denotes the ground-truth value. We report the average error over all 16 physical parameters. For EKF$\rightarrow$PEM (adaptive), the time-varying parameter estimates are first averaged over all validation windows.

\textit{Initialization:} SUBNET requires no initial physical parameters. Joint-MHE and the proposed PEM--MHE use the same physical-parameter initialization, whose average relative physical-parameter error is $67.95\%$. The EKF-based methods rely on the initial model for state estimation and are thus more fragile to large parameter error. In order to provide meaningful results for these methods, we randomly sampled $10$ parameter vectors within $\pm5\%$ error and only report the result with the initialization where the EKF-based method achieves the smallest validation error. 

Table~\ref{tab:simulation} summarizes the results. The proposed PEM--MHE achieves the lowest training and validation errors, with a validation error over $\mathbf{80}$ \textbf{times smaller} than any other method. PEM--MHE further achieves the lowest physical-parameter error.

\begin{table*}[!t]
    \centering
    \caption{Identification results on the IAC real-world vehicle dataset, comparing one-step prediction error and training time. Best and worst error values are highlighted in green and red, respectively.}
    \label{tab:real-world}
    \small
    \setlength{\tabcolsep}{0pt}
    \begin{tabular}{@{}>{\raggedright\arraybackslash}p{.19\textwidth}@{\hspace{.0075\textwidth}}
                    >{\centering\arraybackslash}p{.115\textwidth}@{}>{\centering\arraybackslash}p{.115\textwidth}
                    @{\hspace{.0075\textwidth}}>{\centering\arraybackslash}p{.105\textwidth}@{}}
        \toprule
        \multirow{2}{*}[-.8ex]{Method}
        & \multicolumn{2}{c}{One-Step Prediction Error}
        & \multirow{2}{*}[-1ex]{\shortstack{Training\\Time}} \\
        \cmidrule(l{4pt}r{.0075\textwidth}){2-3}
        & Train & Validation & \\
        \midrule
        SUBNET & $3.16\times10^{-4}$ & \tmax{3.25\times10^{-3}} & 3 min \\
        Joint-MHE & \tmax{4.70\times10^{-4}} & $4.47\times10^{-4}$ & 4 min \\
        EKF$\rightarrow$PEM & $4.37\times10^{-4}$ & $4.20\times10^{-4}$ & 1 min \\
        EKF$\rightarrow$PEM (adaptive) & \tmin{2.20\times10^{-4}} & \tmin{2.62\times10^{-4}} & 4 min \\
        PEM-MHE (physics-only) & $4.24\times10^{-4}$ & $4.10\times10^{-4}$ & 2 h 50 min \\
        PEM-MHE (gray-box) & $2.92\times10^{-4}$ & $2.84\times10^{-4}$ & 23 h 40 min \\
        \bottomrule
    \end{tabular}
\end{table*}

The comparison with Joint-MHE and EKF$\rightarrow$PEM highlights the importance of coupling state estimation with a prediction-error identification objective under partial observation. In Joint-MHE, jointly estimating unknown states and parameters with a single MHE objective allows mutual compensation, so a low estimation cost does not necessarily indicate recovery of the underlying dynamics. In EKF$\rightarrow$PEM, errors in the decoupled state estimates cannot be corrected because the updated model parameters do not refine the state estimates. In contrast, PEM--MHE uses the PEM and MHE objectives for parameter identification and state estimation, respectively, while coupling the two within a unified optimization problem.

The comparison with SUBNET highlights the benefit of prescribed physical structure. Such constraints on the latent coordinates enable PEM--MHE to achieve more accurate estimation while preserving interpretability. Increasing the latent-state dimension of SUBNET may alleviate this limitation.

These improvements come at the cost of increased computational demand due to the nested optimization structure, where each PEM update requires solving and differentiating through the inner nonlinear MHE problem.

The adaptive EKF$\rightarrow$PEM provides no benefit in this simulation, as the true parameters are constant.

\FloatBarrier

\subsection{Real-World Data Validation}
\label{subsec:realworld-results}

We next extend the validation to real-world vehicle data, where unmodeled and time-varying effects provide a more challenging setting.

The baseline methods are the same as those in Section~\ref{subsec:simulation-results}, while PEM--MHE is evaluated with both the physics-based model $h_{\mathrm{phys}}$ and the gray-box model $h_{\mathrm{gray}}$. For the gray-box model, the neural residual uses two hidden layers of width $32$, resulting in $n_\theta=1{,}396$. Its final layer is initialized to zero.
All methods involving physical parameters are initialized with parameter estimate obtained by \citet{chrosniak2024deep} on the same dataset.
The results are shown in Table~\ref{tab:real-world}.

The value of prescribed physical structure is  evident in the substantial train--validation prediction-error gap of SUBNET, whereas the structured models exhibit substantially more consistent generalization.

Building on this prescribed physical structure, the neural residual provides additional flexibility for capturing aspects of the real-world dynamics that are not represented by the physical model, reducing both training and validation prediction errors by approximately $31\%$.

However, despite this improvement, the gray-box PEM--MHE is slightly outperformed by the EKF$\rightarrow$PEM (adaptive) on this dataset. This suggests that the real-world data may be better captured by time-varying parameters, while the proposed PEM--MHE formulation estimates time-invariant model parameters.
The non-adaptive EKF$\rightarrow$PEM exhibits a substantially larger prediction errors than its adaptive counterpart, supporting the hypothesis that the difference is due to the time-varying nature of the dynamics.

%Both methods introduce neural flexibility: the gray-box PEM--MHE directly augments the standard state-space model with a neural residual, whereas the adaptive PEM-EKF uses a neural encoder to infer time-varying effective parameters from historical information. 
%However, because state propagation in PEM--MHE depends only on the instantaneous predefined physical state, effects that are not captured in these states may be missed, while the adaptive encoder can exploit temporal context to account for them. This may explain the remaining performance gap.
% In particular, it seems that this data is better explained by time-varying parameters, which the current PEM-MHE formulation cannot handle.The fact that this is due to the time-varying nature of the parameters can also be seen by the fact that the base EKF$\rightarrow$PEM version has a significantly larger model error. 

% Additionally, we note that the PEM--MHE variants exhibit relatively large latent-state estimation errors compared with the baselines. At initialization, the $v_{\mathrm y}$ RMSE of PEM--MHE is $0.173\,\mathrm{m\,s^{-1}}$, lower than those of all baselines. The subsequent increase in latent-state error despite improved prediction accuracy suggests that, latent-state estimation may also absorb part of the mismatch, allowing the PEM objective to trade latent-state accuracy for improved measured-output prediction.

\section{Conclusion}
\label{sec:conclusion}

In this work, we investigated nonlinear system identification from partial measurements, which couples parameter learning with state estimation. The existing PEM--MHE framework becomes computationally intractable for models with a large number of trainable parameters, such as neural networks.
To address this challenge, we proposed a scalable reformulation
that avoids the explicit construction of large sensitivity matrices and exploits temporal sparsity for efficient hypergradient computation.
Finally, we demonstrated substantial computational speedups through numerical comparisons. This improved parameter scalability enables the use of neural residuals to improve prediction accuracy on real-world vehicle data. We also showed that the retained physical structure facilitates accurate physical-parameter recovery in simulation while preserving interpretability.
While the proposed approach improves computational scalability,
open questions remain regarding sufficient conditions for asymptotic consistency, time-varying model parameters, and further reductions in computational demand.

\section*{Declaration of Generative AI and AI-Assisted Technologies in the Manuscript Preparation Process}

During the preparation of this work, the authors used ChatGPT (OpenAI) to improve the language and readability of the manuscript. After using this tool, the authors reviewed and edited the content as needed and take full responsibility for the content of the publication. 
Codex (OpenAI) and Claude Code (Anthropic) were used to assist with writing and modifying the research code, which was subsequently reviewed by the authors.

\bibliography{references}             

\end{document}